\theoremstyle{plain}
\newtheorem{thm}{Theorem}
\newtheorem{lem}[thm]{Result}
\theoremstyle{definition}
\newtheorem{defn}{Definition}
\theoremstyle{remark}
\def\<{\langle}
\def\E{ {\cal E} }
\def\I{ \mathbb{1} }
\def\I{ \mathbbm{1} }
\newcommand{\tr}[2]{\mathrm{Tr}_{#1}\left[ #2 \right]}
\def\>{\rangle}
\def\<{\langle}
\renewcommand{\v}[1]{\ensuremath{\boldsymbol #1}}
\newcommand{\be}{\begin{equation}}
\newcommand{\ee}{\end{equation}}
\newcommand{\C}{\mathbb{C}}
\newcommand{\ketbra}[2]{\ensuremath{\left|#1\right\rangle\!\!\left\langle#2\right|}}
\begin{document}

\title{Thermodynamic resource theories, non-commutativity and maximum entropy principles}

\author{Matteo Lostaglio}
\affiliation{Controlled Quantum Dynamics Theory Group, Imperial College London, Prince Consort Road, London SW7 2BW, UK}
\affiliation{ICFO--Institut de Ciencies Fotoniques, The Barcelona Institute of Science and Technology, 08860 Castelldefels, Spain}
\author{David Jennings}
\affiliation{Controlled Quantum Dynamics Theory Group, Imperial College London, Prince Consort Road, London SW7 2BW, UK}
\author{Terry Rudolph}
\affiliation{Controlled Quantum Dynamics Theory Group, Imperial College London, Prince Consort Road, London SW7 2BW, UK}

\date{\today}

\begin{abstract}
	We discuss some features of thermodynamics in the presence of multiple conserved quantities. We prove a generalisation of Landauer principle illustrating tradeoffs between the erasure costs paid in different ``currencies''. We then show how the maximum entropy and complete passivity approaches give different answers in the presence of multiple observables. We discuss how this seems to prevent current resource theories from fully capturing thermodynamic aspects of non-commutativity. 
	
\end{abstract}

\pacs{}

\maketitle

\section{Introduction and summary of results}

In this work we present several observations concerning the thermodynamics of systems with multiple and generally non-commuting conserved quantities. Our main results can be summarised as follow:
\begin{enumerate}
	\item In the first part of the paper, we prove a generalisation of Landauer erasure in the presence of multiple conserved charges. This stands in contrast to the standard assertion that erasure of information has an unavoidable \emph{energy} cost; we present simple tradeoffs among the different costs, e.g. energy and angular momentum. These are explicitly illustrated in the qubit case, where we give a tight protocol for information erasure using multiple baths.
	\item In the second part of the paper we discuss how, in the presence of multiple conserved quantities, different approaches to equilibrium can disagree. This gives a broader perspective on the tradeoffs analysed in the context of Landauer erasure.
	\item In the last part of the paper, we discuss the consequences of the previous results for the research program that looks at thermodynamics from a resource theory perspective. In particular, we argue that current resource-theoretic approaches are limited when it comes to determining the thermodynamic impact of non-commutativity.
\end{enumerate}

\section{Landauer erasure in the presence of multiple conserved quantities}

In his standard textbook \cite{callen1985thermodynamics}, Callen discusses the foundations of thermodynamics and asks: does energy play a unique role into it? Linking the first law of thermodynamics to Noether's theorem, he puts forward a natural question: ``Should not momentum and angular momentum play parallel roles with the energy?''. After all, they are generators of other fundamental symmetries of the physical world. He argues that they do; even more, he concludes ``The asymmetry in our account of thermostatistics is a purely conventional one that obscures the true nature of the subject''.

Here we show that this standpoint leads to a reconsideration of the meaning of Landauer's principle, traditionally stated as a fundamental thermodynamic relation imposing a minimum \emph{energy} cost $kT \log 2$ for the erasure of a bit of information in the presence of a heat bath at temperature $T$ ($k$ is Boltzmann's constant) \cite{plenio2001physics}. We consider situations involving multiple conserved quantities (such as energy and angular momentum) and provide explicit protocols illustrating the tradeoff between the costs of erasure in different charges.

\subsection{General bound}

Our derivation \cite{lostagliothesis} of the generalised Landauer's bound is based on the following assumptions, that follow the framework introduced in \cite{reeb2013proving}: 

\begin{enumerate}
	\item \label{assumption1} The initial state $S$ of the system, i.e.  the memory to be erased, and the reservoir $R$ are initially in a product state $\rho_{SR}=\rho_S \otimes \rho_R$. This is the most natural framework, because it models what happens in a typical erasure process. Indeed, allowing initial correlations between memory and reservoirs implies that we could erase a memory while extracting work at the same time, by exploiting the work value of correlations \cite{oppenheim2002approach, rio2010thermodynamic, llobet2015extractable}.
	\item \label{assumption2} The reservoirs have the form of a \emph{Generalised Gibbs Ensemble},
	\be
	\label{eq:firstgeneralisedgibbs}
	\rho_R = \frac{e^{-\sum_i \mu_i C_i }}{\tr{}{e^{-\sum_i \mu_i C_i }}},
	\ee
	for observables $C_i$ (we set $C_0:=H$, the Hamiltonian of the system and $\mu_0:=\beta = 1/(kT)$ the inverse temperature). This bath may factorize in the product of baths if all $C_i$ commute, \emph{but this is not necessary for the following}. 
	\item \label{assumption3} The total system of memory and reservoir is isolated, so it undergoes a general unitary evolution~$U$:
	\be
	\nonumber
	\rho_{SR} \stackrel{U}{\longmapsto} \rho'_{SR}.
	\ee
\end{enumerate}
To fix the notation,
\be
\nonumber
\rho'_S := \tr{R}{\rho'_{SR}}, \qquad \rho'_R := \tr{S}{\rho'_{SR}},
\ee
\be
\nonumber
\Delta C_i = \tr{}{C_i(\rho'_R - \rho_R)}, \quad S(\rho||\sigma) := - S(\rho) - \tr{}{\rho \log \sigma},
\ee
with $S(\rho) = -\tr{}{\rho \log \rho}$ the von Neumann entropy.  From the notation above, $\Delta C_0 := \Delta H$ is the heat flow towards the bath. Then,
\begin{lem}[Landauer principle for multiple conserved quantities]
	\label{lem:landauer}
Under the assumptions \ref{assumption1}, \ref{assumption2} and \ref{assumption3},
\be
\label{eq:landauerbound}
\beta \Delta H + \sum_{i\geq 1} \mu_i \Delta C_i \geq - \Delta S_S,
\ee
where $\Delta S_S = S(\rho'_S) - S(\rho_S)$.
\end{lem}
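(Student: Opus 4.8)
The plan is to adapt the Reeb--Wolf derivation of Landauer's bound \cite{reeb2013proving} to the generalised Gibbs setting. The whole argument rests on chaining together two standard entropic facts: conservation of the global entropy under the closed-system unitary, and non-negativity of the relative entropy of the final reservoir state with respect to its initial generalised Gibbs form \eqref{eq:firstgeneralisedgibbs}.

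First I would exploit assumptions \ref{assumption1} and \ref{assumption3}. Since $U$ is unitary it preserves the global von Neumann entropy, $S(\rho'_{SR}) = S(\rho_{SR})$, and by the product structure of the initial state $S(\rho_{SR}) = S(\rho_S) + S(\rho_R)$. Applying subadditivity to the final state, $S(\rho'_{SR}) \leq S(\rho'_S) + S(\rho'_R)$, then yields the entropy-balance inequality
\be
\nonumber
S(\rho'_R) - S(\rho_R) \geq S(\rho_S) - S(\rho'_S) = -\Delta S_S,
\ee
i.e. any entropy lost by the memory must be absorbed by the reservoir.

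The second step converts this bath-entropy change into charge flows via the explicit form of $\rho_R$ (assumption \ref{assumption2}). The key observation is the operator identity $\log \rho_R = -\sum_i \mu_i C_i - \log Z$, with $Z = \tr{}{e^{-\sum_i \mu_i C_i}}$, which holds whether or not the $C_i$ commute. Inserting this into the definition $S(\rho'_R||\rho_R) = -S(\rho'_R) - \tr{}{\rho'_R \log \rho_R}$ and using $S(\rho_R) = \sum_i \mu_i \tr{}{\rho_R C_i} + \log Z$ to eliminate $\log Z$, I obtain
\be
\nonumber
S(\rho'_R||\rho_R) = -\left[ S(\rho'_R) - S(\rho_R) \right] + \sum_i \mu_i \, \tr{}{C_i(\rho'_R - \rho_R)}.
\ee
Recognising the last sum as $\sum_i \mu_i \Delta C_i$ and invoking $S(\rho'_R||\rho_R) \geq 0$ gives $\sum_i \mu_i \Delta C_i \geq S(\rho'_R) - S(\rho_R)$. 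Chaining this with the entropy balance of the previous step, and splitting off the $i=0$ term $\mu_0 \Delta C_0 = \beta \Delta H$, delivers exactly \eqref{eq:landauerbound}.

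The reassuring feature is that no step requires the $C_i$ to commute: the generalised Gibbs ensemble enters only through the logarithm identity, so the derivation is formally insensitive to non-commutativity, each charge contributing additively to the bound. I therefore do not expect the algebra to be the obstacle. The only point demanding genuine care is the usual one for reservoir arguments, namely ensuring the von Neumann and relative entropies, and the traces $\tr{}{C_i \rho'_R}$, are finite and well defined when the bath is infinite-dimensional; for finite-dimensional systems these are automatic, so I expect this technical finiteness (rather than the inequalities themselves) to be the main, and essentially routine, subtlety.
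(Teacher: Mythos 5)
Your proposal is correct and follows essentially the same route as the paper: unitary invariance plus the initial product structure for the entropy balance (your subadditivity step is exactly the non-negativity of the final-state mutual information that the paper keeps as an explicit equality), and the operator identity $\log\rho_R = -\sum_i \mu_i C_i - \log Z$ to convert the reservoir entropy change into charge flows minus $S(\rho'_R\|\rho_R)\geq 0$. The only cosmetic difference is that the paper retains the mutual information and relative entropy as exact correction terms before discarding them, whereas you apply the two inequalities immediately and chain them; the resulting bound and the key observation that nothing requires the $C_i$ to commute are identical.
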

\begin{proof}
The proof of \cite{reeb2013proving} (Theorem~3) goes through independently of the non-commutativity of the $C_i$. Let \mbox{$\Delta S_X = S(\rho'_X) - S(\rho_X)$}, with $X=S,R$, and denote by \mbox{$I(S':R') := S(\rho'_S) + S(\rho'_R) - S(\rho'_{SR})$} the mutual information in the final state. Unitary invariance of the von Neumann entropy gives $S(\rho_{SR}) = S(\rho'_{SR})$. Then, 
\be
\label{eq:mutualinfo}
\Delta S_S + \Delta S_R = I(S':R'),
\ee 
i.e. the sum of the changes of the local entropies equals the correlations created in the transformation, as measured by the mutual information (this is a refinement of $\Delta S_S + \Delta S_R \geq 0$). Substituting Eq.~\eqref{eq:firstgeneralisedgibbs} in the expression for $\Delta S_R$ one finds
\be
\Delta S_R = \beta \Delta H + \sum_{i\geq 1} \mu_i \Delta C_i - S(\rho'_R||\rho_R).
\ee
This equation reduces to Clausius relation $\Delta H = k T \Delta S_R$ when we can make the assumption \mbox{$S(\rho'_R||\rho_R) \approx 0$} (which is typically the case for a macroscopic bath) and when only energy flows are involved. Together with Eq.~\eqref{eq:mutualinfo} this gives
\begin{small}
\be
\nonumber
-\Delta S_S + I(S':R') = \beta \Delta H + \sum_{i\geq 1} \mu_i \Delta C_i - S(\rho'_R||\rho_R).
\ee
\end{small}
From the non-negativity of mutual information and relative entropy, we obtain Eq.~\eqref{eq:landauerbound}.
\end{proof}

The above analysis indicates that the resultant trade-off is not particularly sensitive to the fact that the $C_i$ may not commute. The only impact the non-commutativity has is that throughout the sequence of bath interactions one obtains a ``time-dependent'' pairs of expectations $(\<C_i(t)\>, \<C_j(t)\>)$ subject to the uncertainty in the two observables.

\subsection{An explicit, tight protocol for qubit erasure}
\label{sec:qubitprotocol}

The simplest system with multiple conserved quantities is one with Hamiltonian $H$ and a single conserved charge $Q$, e.g, angular momentum in some fixed direction. As discussed the reservoir is assumed to have the form of Eq.~\eqref{eq:firstgeneralisedgibbs},
\begin{equation}
\label{eq:twochargesgibbs}
\gamma_R = \frac{e^{ - \beta H_R-\alpha Q_R}}{\tr{}{e^{ - \beta H_R-\alpha Q_R}}}.
\end{equation}
We take here $[H, Q]= [H_R, Q_R]=0$. $H_R$, $Q_R$ are observables on the bath Hilbert space physically corresponding to $H$ and $Q$ on the system.
$\beta$ and $\alpha$ are fixed inverse temperatures. Since we assume $[H_R,Q_R]=0$, one can formally factor the quantum state of Eq.~\eqref{eq:twochargesgibbs}. We will view it as if we have two independent baths at our disposal (a thermal bath and a ``$Q$-bath'') and we are able to put the system in contact with each of them separately.

Consider for simplicity the system Hilbert space \mbox{$\mathcal{H}=\C^2 \otimes \C^2$}, spanned by the eigenstates of two commuting observables $H \otimes \I $ and $\I \otimes Q$. Define the states $\{ \ket{00}, \ket{01}, \ket{10}, \ket{11} \}$, where $\ket{hq}:=\ket{h} \otimes \ket{q}$ and $\ket{h}$, $\ket{q}$ are eigenstates of  $H$ and $Q$, respectively. Assume these four states to be initially degenerate in energy $H$ and charge $Q$. Landauer's principle is usually stated as a fundamental lower bound on the \emph{energy} cost of the process of information erasure \cite{plenio2001physics}. We now show explicitly that this is not necessarily the case.

One can erase a qubit system in the angular momentum degree of freedom using a spin bath, with the erasure cost respecting a bound of the form $\Delta Q \geq \alpha^{-1} \log 2$ and zero energy cost \cite{vaccaro2011information} (here $\alpha$ is the inverse ``temperature'' of the spin bath). Going beyond this, we illustrate here the tradeoff between costs in different charges proven in Eq.~\eqref{eq:landauerbound}.

Let us encode classical bits in the states $\ket{00}$ and $\ket{10}$ in $\mathcal{H}$. 
We now develop a one-parameter family of optimal protocols (see Fig.~\ref{experiment}):

\begin{enumerate}
	\item Let us start with the levels $\ket{00}$ and $\ket{01}$ in a maximally mixed state (which describes the single bit memory to be erased):
	\be
	\nonumber
	\rho_{1} = 1/2\ketbra{00}{00} + 1/2\ketbra{10}{10}.
	\ee
	Also notice that, since $\ket{00}$ and $\ket{10}$ are degenerate in energy, $\rho_{1}$ is initially in thermal equilibrium with the thermal bath.
	
	\item Changing the Hamiltonian of the system, map the state $\ket{01}$ with energy $\bar{\epsilon}$ to a new state (still denoted by $\ket{01}$) with energy $\bar{\epsilon} + \Delta \bar{\epsilon}$, at an average cost $p(\bar{\epsilon}) \Delta \bar{\epsilon}$. Then put the system in contact to the heat bath and completely thermalise it with respect to the current Hamiltonian. Repeating these operations in a sequence of $N \rightarrow \infty$ steps with $\Delta \bar{\epsilon} \rightarrow 0$, the cost of raising the energy level from $0$ to $\epsilon$ is given by
	\be
	\nonumber
	\Delta H = \int_{0}^{\epsilon} \frac{e^{-\beta \bar{\epsilon}}}{1+ e^{-\beta \bar{\epsilon}}}d\bar{\epsilon} = \frac{1}{\beta}\ln \left( \frac{2}{1+e^{-\beta \epsilon}} \right).
	\ee
	The final state of the memory to be erased is 
	\be
	\nonumber
	\rho_{2} = \frac{1}{1+e^{-\beta \epsilon}} \ketbra{00}{00} + \frac{e^{-\beta \epsilon}}{1+e^{-\beta \epsilon}} \ketbra{10}{10}. 
	\ee
	\item Now we make use of the charge degree of freedom. The two states $\ket{00}$ and $\ket{01}$ are degenerate in charge $Q$. Following the same procedure as before we induce a level-raising $\ket{01}$ from $0$ to $q$, where we choose
\be
	\label{qchoice}
	q= \beta \epsilon / \alpha.
	\ee
	The reason for this choice will be clear in a moment. This can be done at no cost, because there is no population in $\ket{01}$, i.e. $p(q)=0$ throughout the process.\footnote{Also notice that we are assuming that the levels associated to the eigenstates of $Q$ are perfectly controllable. We will comment on this later in this section.} 
	
	\item We apply a unitary that performs the swap $\ket{00} \rightarrow \ket{00}$, $\ket{01} \leftrightarrow \ket{10}$, $\ket{11} \rightarrow \ket{11}$.
	The costs associated to this unitary are
	\be
	\nonumber
	\Delta H = -\frac{e^{-\beta \epsilon}}{1+e^{-\beta \epsilon}} \epsilon, \quad \Delta Q = \frac{e^{-\beta \epsilon}}{1+e^{-\beta \epsilon}} q,
	\ee
	and the final state, using Eq. \eqref{qchoice}, is
	\be
	\nonumber
	\rho_{3} = \frac{1}{1+e^{-\alpha q}} \ketbra{00}{00} + \frac{e^{-\alpha q}}{1+e^{-\alpha q}} \ketbra{01}{01}.
	\ee
	
	\item We can complete the erasure leaving the system in contact with the Q-bath and raising the level $\ket{01}$ from $q$ to $\infty$. Thanks to the choice of the initial $q$ (Eq. \eqref{qchoice}), $\rho_{3}$ is initially in equilibrium with the $Q$-bath. Hence this part of the protocol is formally analogous to steps 1-2, but is achieved using a physically different bath (e.g. a spin bath). The charge cost of the partial erasure in the charge basis is
	\be
	\nonumber
	\Delta Q =  \int_{\frac{\beta \epsilon}{\alpha}}^{+\infty} \frac{e^{-\alpha \bar{q}}}{1+ e^{-\alpha \bar{q}}}d\bar{q} = \frac{1}{\alpha}\ln (1+ e^{-\beta \epsilon}).
	\ee
	The final (erased) state of the memory is $\ket{00}$. This completes the protocol.
\end{enumerate}

\begin{figure}[t!]
	\includegraphics[width=5.5cm]{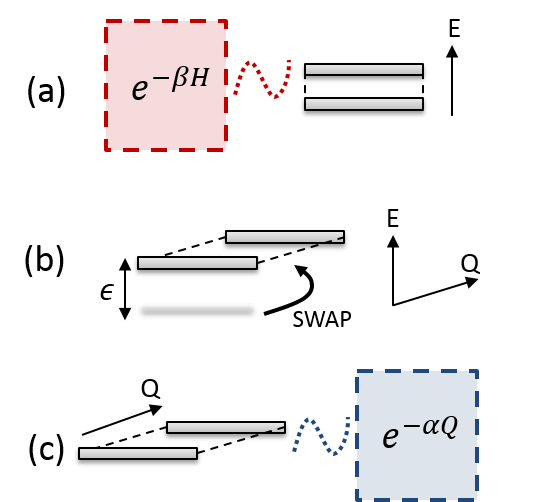}
	\caption{\label{experiment} \textbf{Generalised Landauer erasure.} A maximally mixed state on a two-level system is erased using both a thermal bath (a) and a spin bath (c). A unitary process interchanges between the two erasure modes by a rotation from the energy qubit $E$ to the charge  qubit $Q$, represented in (b) as two orthogonal directions. What is the general tradeoff between the different charges dissipated? The protocol provided achieves the bound of Result~\ref{lem:landauer}, i.e. $\beta \Delta H + \alpha \Delta Q = \log 2$.}
\end{figure}

The total cost of erasure splits into an energy contribution and a charge contribution; these can be expressed in terms of the single parameter $\epsilon$, the energy at which we decide to swap basis:
\be
\label{Etot}
\Delta H_{tot}(\epsilon) =  \frac{1}{\beta}\ln \left( \frac{2}{1+e^{-\beta \epsilon}} \right)- \frac{e^{-\beta \epsilon}}{1+e^{-\beta \epsilon}} \epsilon,
\ee
\be
\label{Qtot}
\Delta Q_{tot}(\epsilon)= \frac{1}{\alpha}\ln (1+ e^{-\beta \epsilon}) + \frac{e^{-\beta \epsilon}}{1+e^{-\beta \epsilon}} \frac{\beta}{\alpha}\epsilon.
\ee
The result is shown in Fig.~\ref{fig2}. Each value of $\epsilon$ defines a different protocol, corresponding to a point in the ``energy-cost'' versus ``charge cost'' graph. The protocols described yield a curve ($\Delta H_{tot} (\epsilon), \Delta Q_{tot} (\epsilon)$) for fixed inverse temperatures $\beta$ and $\alpha$. 

As expected, we recover the usual Landauer erasure bound, $\Delta H_{tot} = \beta^{-1} \ln 2$, in the $\epsilon \rightarrow \infty$ limit and the erasure at no energy of \cite{vaccaro2011information}, $\Delta Q_{tot} = \alpha^{-1} \ln 2$, in the $\epsilon \rightarrow 0$ limit. The curves in Fig.~\ref{fig2} interpolating between these two limits satisfy tightly the bound of Eq.~\eqref{eq:landauerbound}, as can be seen combining Eqs.~\eqref{Etot} and~\eqref{Qtot}:
\be
\label{eq:landauermultiple}
\beta \Delta H_{tot}(\epsilon) + \alpha \Delta Q_{tot}(\epsilon) = \ln 2, \quad \forall \epsilon \in [0,+\infty].
\ee

\begin{figure}[h]
	\includegraphics[width=5cm]{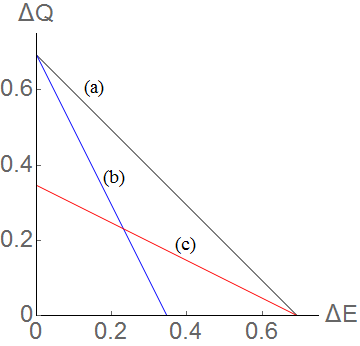}
	\caption{\label{fig2} Energy -- charge tradeoff for erasure of a memory through two baths in the protocols provided. The protocols achieve tightly the bound of Eq.~\eqref{eq:landauerbound}. Different points along each curve are parametrised by the energy $\epsilon$ at which we swap from energy erasure to charge erasure. (a), black curve: \mbox{$\beta = \alpha=1$}; (b), blue  curve: $\beta=2$, $\alpha = 1$; (c), red curve: $\beta= 1$, $\alpha=2$.}
\end{figure}

This shows how simple trade-offs exist for erasure in the presence of multiple conserved charges. One could also compare with~\cite{barnett2013beyond}, where the authors show that if erasure is performed with respect to angular momentum degrees of freedom that are not degenerate in energy, Eq.~\eqref{eq:landauermultiple} cannot be achieved.

Note that in a real experiment it might not be physically possible to modify the level structure arbitrarily, as was assumed here. This is the case for a $Q$-bath being a spin bath. One wishes to change the level structure (in angular momentum) by introducing multiple aligned spins and performing CNOT operations \cite{vaccaro2011information}. However Nature places a physical lower bound on the discrete steps, $\sim \hbar$. This does not change the general picture presented in the optimal protocol above, but introduces non-optimalities, relevant at low temperatures, which might be worth exploring (see \cite{vaccaro2011information} and Appendix~\ref{appendix_discrete}).\footnote{It is worth noticing that a major challenge for the experimental realisation of such protocol is the need for a detailed control of the interaction between the angular momentum qubit and the spin bath (see \cite{vaccaro2011information}).}

Why do such tradeoffs emerge? We now develop a broader perspective, based on the clash between different notions of equilibrium in the presence of multiple conserved quantities.

\section{Relationships between different approaches to equilibrium states}

Standard textbooks extensively discuss the thermodynamics of quantum systems with multiple, commuting conserved quantities, especially in relation to the grand canonical ensemble \cite{pathria1996statistical}.
Our focus will be on notions of equilibrium based on the maximum entropy principle and complete passivity, and we will allow for the conserved quantities to be mutually non-commuting. For related works, see \cite{halpern2016microcanonical, guryanova2016thermodynamics, perarnau2015work}.

Even though we do not focus exclusively on those, it is useful to give some simple examples of systems with mutually non-commuting conserved quantities. One is that of a rotationally symmetric Hamiltonian $H$. The angular momentum in three orthogonal directions $L_i$, \mbox{$i=1,2,3$}, are conserved quantities, $[H,L_i] =0$; of course, if $i \neq j$, $[L_i,L_j] \neq 0$. An even simpler example is the three-qubit system with Hamiltonian $H = Z \otimes Z \otimes Z$ and the mutually non-commuting conserved quantities $A = X \otimes X \otimes \I$ and $B = \I \otimes Y \otimes Y$. Assuming that the system attains some equilibrium values for energy and angular momentum $\langle H \rangle$, $\langle L_i \rangle$, $i=1,2,3$ (or $\langle H \rangle$, $\langle A \rangle$, $\langle B \rangle$ in the second example) what is the equilibrium state? 

An alternative point of view is given by the thermalisation of an isolated many-body quantum system. Strictly speaking, such systems cannot thermalise, as they evolve unitarily. However, in many scenarios, for long times sufficiently local observables will be effectively described by thermal density matrices \cite{polkovnikov2011colloquium}. In such systems the existence of local observables whose average value is conserved imposes further constraints. We ask here what happens when many of them are present, and possibly some are non-commuting.

\subsection{The maximum entropy principle}

A typical way to proceed is to invoke the \emph{maximum entropy principle} \cite{jaynes1}. Let $\mathcal{C}:=\{C_i\}$ denote the set of conserved quantities and $\bar{c}_i$ is their average value at equilibrium (by convention, $C_0:=H$, the Hamiltonian of the system being trivially a conserved change). The principle says that the equilibrium state is the solution to the problem \cite{vonneumann1927thermodynamik}\,\footnote{We give a straightforward generalisation of von Neumann's formulation that allows for multiple conserved quantities.}
\begin{eqnarray}
\label{eq:maxentformulation}
\mbox{Maximize } & S(\rho) = - \tr{}{\rho \log \rho}, \nonumber \\ \mbox{ subject to: } & \tr{}{\rho \, C_i} = \bar{c}_i, \; k=1,...,n.
\label{eq:maxentropy}
\end{eqnarray}

The solution is a so-called Generalised Gibbs Ensemble with respect to the conserved quantities, i.e.
\be
\label{eq:generalisedgibbs}
\rho_{\mathcal{C}} = e^{-\sum_i \mu_i C_i}/ \tr{}{ e^{-\sum_i \mu_i C_i}},
\ee
where $\mu_i$ are generalised chemical potential that can be easily computed as functions of the given values $\bar{c}_i$ (by the convention chosen, $\mu_0 := \beta = 1/(kT)$).\footnote{Note that the maximum entropy description is robust, in the sense that it can be applied to situations in which the system only has approximate constants of motion. For details, see \cite{kollar2011generalized}.} If $\mathcal{C}$ consists of only the Hamiltonian, one recovers the standard Gibbs ensemble. $\mathcal{C} = \{H,N\}$, where $N$ is the particle number operator, gives the grand canonical ensemble. Of course, as $N$ is by assumption a conserved quantity, \mbox{$[H,N] =0$}; hence, non-commutativity cannot appear in this problem. Non-commutativity, on the other hand, appears in the examples given at the beginning of this section.

A brief comment is necessary regarding the use of the maximum entropy principle in the presence of non-commuting quantities. In the commuting case, this can be justified as an essentially unique inference method satisfying consistency axioms in the handling of information \cite{shore}. In general, the principle can be given a microcanonical derivation. This is based on 1. Identifying expectation values with ensemble averages 2. Assigning equal probability to every element of an ensemble characterised by a total $C_i$ peaked around $\bar{c}_i$. Technical challenges arise in the non-commutativity case from the fact that the total $C_i$s are only approximately mutually commuting for any finite ensemble, so they cannot all have strictly well-defined values. These issues were tackled in \cite{balian1987equiprobability}, and recently in \cite{halpern2016microcanonical}, through the notion of an approximate microcanonical ensemble (Def.~2 Section~3 of \cite{halpern2016microcanonical} and Eq.~(2.24) of \cite{balian1987equiprobability}). This leads to the recovery of Eq.~\eqref{eq:generalisedgibbs}.

Once the maximum entropy principle is accepted, the important observation is the following: \emph{the construction works equally well for commuting or non-commuting observables}, In every case, it gives a state of the form of Eq.~\eqref{eq:generalisedgibbs}. However, there are subtler differences. Consider the projection  $\rho \mapsto \rho_{\mathcal{C}}$, 	where $\rho_{\mathcal{C}}$ is the maximum entropy state among all states $\sigma$ satisfying $\tr{}{\sigma C_i} = \tr{}{\rho C_i}$ for all $C_i \in \mathcal{C}$. This map is not necessarily completely positive when $\mathcal{C}$ contains mutually non-commuting observables (see Appendix~\ref{appendix_ME}). This agrees with the fact that the maximum entropy projection entails an inherently inferential procedure and, strictly speaking, cannot be captured purely dynamically without invoking some notion of coarse-graining or state-dependent processes. Another difference regards discontinuities induced by non-commutativity, as we mention later.

We now compare this construction to another standard thermodynamic approach based on the notions of passivity and complete passivity. We show that a disagreement emerges in the presence of multiple conserved quantities and discuss its significance.

\subsection{Complete passivity}

Among various ways of describing the content of the second law of thermodynamics, one is that no work can be extracted from a system in thermal equilibrium by means of an adiabatic process \cite{lenard1978thermodynamical}. Adiabatic here means a process where some external controls are varied for some finite time, inducing a unitary evolution generated by a time-dependent Hamiltonian $H(t)$. The total work done on the system is then traditionally defined as
\be
W(t) = \int_{0}^t \tr{}{\rho(t) \frac{dH(t)}{dt}}dt,
\ee 
where $\rho(t)$ evolves under the Schr\"odinger equation, $d\rho(t)/dt = -i [\rho(t), H(t)]$. Equivalently, one can easily check that if $U$ is the unitary evolution generated by $H(t)$ up to the final interaction time $t_F$,
\be
\label{eq:work}
W(t_F) = \tr{}{U \rho U^\dag H} - \tr{}{ \rho H }:=W_H(U,\rho),
\ee
where $\rho$ and $H$ denote, respectively, the state and the Hamiltonian at the initial time \cite{lenard1978thermodynamical}. The second law in the form stated above imposes that $\rho$, when in equilibrium, should take a form such that $W_H(U,\rho) \geq 0$ for all unitaries $U$. This can be interpreted as the fact that one can never displace a weight system upwards using the state $\rho$. If this holds  for every unitary evolution, $\rho$ is called a \emph{passive} state. A relatively straightforward computation shows that a state $\rho$ is passive if and only if $[\rho,H]=0$ and its eigenvalues are a non-increasing function of energy (i.e. there are no ``population-inversions") \cite{lenard1978thermodynamical, pusz1978passive, skrzypczyk2015passivity}.

More importantly, in 1978 an answer was formulated for the following question: what states are \emph{completely passive}, in the sense that arbitrarily many copies cannot raise a weight? More precisely, for what states $\rho$ do we have \mbox{$W_H(U,\rho^{\otimes n}) \geq 0$} for all $n \in \mathbb{N}$ and all unitaries $U$? The answer was given first from an algebraic perspective \cite{pusz1978passive} and later by Lenard using finite-dimensional methods \cite{lenard1978thermodynamical}. One consequence of these seminal works is that, for finite-dimensional systems, the Gibbs state is the only functional form (modulo limiting cases of ground states and ``infinite temperature'' systems) that does not trivialise every work process when it is assumed to be freely available.\footnote{For infinite dimensional systems such as quantum fields this notion is generalized to the set of KMS states and has a more complex structure, which we entirely ignore here.}

\subsection{Disconnection between maximum entropy construction and complete passivity}

Extending Eq.~\eqref{eq:work} to other conserved quantities beyond energy, passivity and complete passivity can be defined in an analogous way:
\begin{defn}
Given an observable $C$, a state $\rho$ is called $C$-passive if \mbox{$W_C(U,\rho) \geq 0$} for every unitary $U$. $\rho$ is $C$-completely passive if $\rho^{\otimes n}$ is $C$-passive for every $n \in \mathbb{N}$.
\end{defn}

In the case of a single conserved quantity, there is agreement between the maximum entropy and the complete passivity point of views of equilibrium. For example, a $Z$-spin bath with degenerate Hamiltonian has a maximum entropy state $\propto e^{-\alpha L_z}$, which is completely passive with respect to $L_z$, so that angular momentum in the $Z$-direction cannot be extracted from any number of copies of it by any unitary interaction.

On the other hand, let us analyse the multiple observables case. We then have
\begin{lem}[Disconnection between approaches] Let \mbox{$\mathcal{C} = \{H, C_1,...,C_n\}$}. Then, for $n \geq 1$, the maximum entropy state $\propto \exp \left[\sum_{i=0}^n \mu_i C_i\right]$ is $C(\v{\mu})$-completely passive, where $C(\v{\mu}):=\sum_{i=0}^n \mu_i C_i$, but in general it is not $C_i$-completely passive for each $i$.
	\label{lem:disagreement}
\end{lem}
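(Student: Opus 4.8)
The plan is to split the statement into a positive part (complete passivity with respect to the combined observable $C(\v{\mu})$) and a negative part (failure for the individual $C_i$), and to handle them by quite different means. For the positive part I would observe that $\rho_{\mathcal{C}} \propto e^{-C(\v{\mu})}$ is, by inspection, precisely the Gibbs state associated with the single effective ``Hamiltonian'' $C(\v{\mu}) = \sum_i \mu_i C_i$ at inverse temperature one. On $n$ copies the relevant observable in the definition of $C(\v{\mu})$-passivity is the extensive sum $\sum_{k=1}^n C(\v{\mu})^{(k)}$, and $\rho_{\mathcal{C}}^{\otimes n}$ is exactly its Gibbs state. The complete passivity of Gibbs states established by Pusz--Woronowicz and Lenard (cited above) then applies verbatim, with $C(\v{\mu})$ playing the role of the energy observable, giving $W_{C(\v{\mu})}(U, \rho_{\mathcal{C}}^{\otimes n}) \geq 0$ for all $n$ and all $U$. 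This part is essentially immediate once one recognises the generalised Gibbs ensemble as an ordinary Gibbs state for $C(\v{\mu})$.

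For the negative part I would argue by contradiction using the passivity characterisation recalled earlier: any $C$-passive state must satisfy $[\rho, C] = 0$, and since complete passivity includes the case $n=1$, $C_i$-complete passivity of $\rho_{\mathcal{C}}$ would force $[\rho_{\mathcal{C}}, C_i] = 0$. Because $\rho_{\mathcal{C}}$ is a strictly positive, invertible function of $C(\v{\mu})$ --- indeed $C(\v{\mu}) = -\log(Z \rho_{\mathcal{C}})$ --- commuting with $\rho_{\mathcal{C}}$ is equivalent to preserving the eigenspaces of $C(\v{\mu})$, i.e. to $[C_i, C(\v{\mu})] = 0$. Expanding, complete $C_i$-passivity would therefore require $\sum_{j} \mu_j [C_i, C_j] = 0$, a condition automatically met in the commuting case but generically violated otherwise.

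To make ``in general it is not'' precise I would then exhibit an explicit counterexample, for which the three-qubit system already introduced in the text is convenient: $H = Z\otimes Z \otimes Z$, $A = X \otimes X \otimes \I$, $B = \I \otimes Y \otimes Y$. A direct check gives $[A,H] = [B,H] = 0$ (so $A,B$ are genuine conserved charges) while $[A,B] = X \otimes [X,Y] \otimes Y \neq 0$. Taking $\mathcal{C} = \{H, A, B\}$ with $\mu_B \neq 0$, one finds $[A, C(\v{\mu})] = \mu_B [A,B] \neq 0$, so $[\rho_{\mathcal{C}}, A] \neq 0$ and $\rho_{\mathcal{C}}$ fails to be even $A$-passive, hence is not $A$-completely passive.

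The step I expect to require the most care is the negative part, and specifically the claim that non-commutativity is genuinely responsible for the failure. Two points must be handled cleanly: first, that it suffices to defeat single-copy passivity (which is legitimate precisely because complete passivity subsumes the $n=1$ case, so one need not analyse the harder many-copy work-extraction problem at all); and second, the equivalence $[\rho_{\mathcal{C}}, C_i] = 0 \Leftrightarrow [C(\v{\mu}), C_i] = 0$, which rests on $\rho_{\mathcal{C}}$ being an invertible function of $C(\v{\mu})$ and should be stated with attention to possible spectral degeneracies of $C(\v{\mu})$. Everything else is routine, and the qualifier ``in general'' is essential: when all the $C_i$ commute the condition $\sum_j \mu_j[C_i,C_j]=0$ holds trivially and the two notions of equilibrium coincide, as already noted for the single-charge case.
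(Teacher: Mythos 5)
Your positive part is correct and takes a genuinely different (and legitimate) route from the paper: you identify $\rho_{\mathcal{C}}^{\otimes n}$ as the ordinary Gibbs state, at unit inverse temperature, of the extensive observable $\sum_{k=1}^n C(\v{\mu})^{(k)}$ and invoke the Pusz--Woronowicz/Lenard theorem, whereas the paper (Appendix C) gives a self-contained argument via the identity $S(\rho||\rho_{\mathcal{C}}) = F_{\mathcal{C}}(\rho) - F_{\mathcal{C}}(\rho_{\mathcal{C}})$ with $F_{\mathcal{C}}(\rho) = \sum_i \mu_i \tr{}{\rho C_i} - S(\rho)$, together with unitary invariance of entropy and additivity of $F_{\mathcal{C}}$ over copies. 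Your three-qubit counterexample is also correct as far as it goes, and it does witness the literal ``in general'' claim for non-commuting charges.

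The genuine gap is your closing assertion that when all the $C_i$ commute ``the two notions of equilibrium coincide.'' That is false, and the paper's proof devotes its entire first half to showing the opposite. The commutation relation $[\rho, C_i]=0$ that your method tests is only one of the two conditions in the passivity characterisation; the other is that the eigenvalues of $\rho$ be non-increasing in the eigenvalues of $C_i$, and it is this second condition that fails in the commuting case. Concretely, for $\mathcal{C}=\{H, C_1\}$ with $[H,C_1]=0$ and $C_1 \neq H$, the populations of $\rho_{\mathcal{C}} \propto e^{-\beta H - \mu_1 C_1}$ in the joint eigenbasis are $\propto e^{-\beta \epsilon - \mu_1 \ell}$; ordering by $\epsilon$ alone one generically finds population inversions (a joint eigenstate with higher energy but much smaller charge carries more population than one with lower energy and large charge), so $\rho_{\mathcal{C}}$ is not even $H$-passive, let alone $H$-completely passive. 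This is not a corner case: it is precisely the mechanism behind the energy/angular-momentum erasure tradeoff of Section II, where $[H,Q]=0$ throughout, and it is why the lemma is stated for all $n\geq 1$ rather than only for non-commuting collections. Because your argument relies exclusively on the commutator necessary condition, it is structurally blind to this failure mode, and you converted that blindness into an affirmative (false) claim. A complete proof along the paper's lines must handle the commuting case by the population-inversion argument, and may then use the symmetry/commutator argument (as both you and the paper do) for the non-commuting case.
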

\begin{proof}
Consider first the case where $C_i$ are mutually commuting. It then suffices to focus on the case where there is a single extra conserved charge $C_1$ beyond energy (with $C_1 \neq H$). In this case the maximum entropy state $\rho_{\mathcal{C}}$ reads, due to $[H,C_1]=0$,
\be
\nonumber
\rho_{\mathcal{C}} \propto e^{-\beta H} e^{-\mu_1 C_1} := \gamma_0 \gamma_1.
\ee
 A necessary condition for $\gamma_0 \gamma_1$ to be $H$-completely passive is to be $H$-passive. Passivity is equivalent to 1. [$\gamma_0 \gamma_1, H] = 0$ (which is satisfied) and 2. The eigenvalues of $\gamma_0 \gamma_1$ are monotonically decreasing for strictly increasing eigenvalues of $H$. However fix $\epsilon$ and $\tilde{\epsilon}$ with $\epsilon < \tilde{\epsilon}$. One in general can find $\ell > \tilde{\ell}$ such that $e^{-\beta \epsilon - \mu_1 \ell} > e^{-\beta \tilde{\epsilon} - \mu_1 \tilde{\ell}}$. Hence $H$-passivity is generically violated.

Consider now the non-commuting case. Given a conserved quantity $C$, a necessary condition for passivity is $[\rho, C] =0$. A useful equivalent way of thinking about this condition is that $\rho$ must be \emph{symmetric} under the $U(1)$ group generated by $C$, i.e. $e^{-i C t}\rho e^{i C t} = \rho$ for all $t$. Consider then the simple example \mbox{$\mathcal{C} = \{H, L_x, L_y, L_z\}$}, with $H$ spherically symmetric. The maximum entropy state of Eq.~\eqref{eq:generalisedgibbs} reads
\be
\label{eq:sphericalme}
\rho_{\mathcal{C}} \propto e^{-\beta H} e^{- \mu_x L_x - \mu_y L_y - \mu_z L_z}:= e^{-\beta H} e^{-\v{\mu} \cdot \v{L}},
\ee
where $\mbox{$\v{\mu}:=(\mu_x,\mu_y, \mu_z)$}$ and $\v{L} = (L_x, L_y, L_z)$. Assuming $\v{\mu} \neq \v{0}$, one can quickly check that this generalised Gibbs ensemble \emph{is not} spherically symmetric, i.e. it is not invariant under the $SU(2)$ symmetry group generated by $L_x$, $L_y$, $L_z$. In fact, $\rho_{\mathcal{C}}$ is only invariant under a $U(1)$ subgroup representing rotations about the direction $\v{\mu}$. This implies that in general the state is not passive with respect any of the conserved quantities $L_i$, and so it is not $L_i$-completely passive. 

Finally, showing that the maximum entropy state is $C(\v{\mu})$-completely passive is based on a standard argument, that we give in Appendix \ref{appendix_cmucompletepassivity}.
\end{proof}

Notice that we cannot strengthen Result~\ref{lem:disagreement} and say that $\rho_{\mathcal{C}}$ is not $C_i$-completely passive for any $i$, even under the additional assumption $\mu_i \neq 0$ for all $i$. Here is a counterexample: let $C_1 = \ketbra{+}{+}$, $C_2 = \ketbra{0}{0}$, $C_3 = \ketbra{1}{1}$ be observables on an effective qubit degree of freedom, with $\mu_1=\mu_2=\mu_3:=\mu \neq 0$ ($H$ commutes so we can leave it out of this discussion). One can check that $[C_1, C_2] \neq 0$, $[C_1,C_3] \neq 0$. But $\rho_{C} \propto e^{-\mu S}$, $S = C_1 + C_2 + C_3 = \I + C_1$, so $\rho_{\mathcal{C}}$ is completely passive with respect to $C_1$.

What Result~\ref{lem:disagreement} implies for the example of angular momentum presented above is that from arbitrarily many copies of the maximum entropy state of a system with rotationally invariant Hamiltonian we can generically extract an unbounded amount of angular momentum in every direction excluding $\v{\mu}$. We can now understand the tradeoffs of Result~\ref{lem:landauer} from Result~\ref{lem:disagreement}, noticing that the maximum entropy state used in the derivation is \mbox{$\beta \Delta H + \sum_{i\geq 1} \mu_i C_i$}-completely passive, but not $C_i$-completely passive for each $i$.

\subsection{Physical considerations on a weaker form of complete passivity}

Let us come back to the above example of a spherically symmetric Hamiltonian $H$, with $\mathcal{C} = \{H,L_x,L_y,L_z\}$, and $\rho_{\mathcal{C}}$ given by Eq.~\eqref{eq:sphericalme}. Now notice that \mbox{$\rho_{\mathcal{C}} = \rho_{\tilde{\mathcal{C}}}$}, where $\tilde{\mathcal{C}} = \{\tilde{H}(\v{\mu})\}$ and $\tilde{H}(\v{\mu}) = H + \beta^{-1} \v{\mu} \cdot \v{L}$. In fact, this is true in general, taking $\tilde{\mathcal{C}} = \{\tilde{H}(\v{\mu})\}$, \mbox{$\tilde{H}(\v{\mu}) = H + \beta^{-1} C(\v{\mu})$}.

What this means is that the maximum entropy construction gives the same state for a system with multiple observables that it would assign to a system with a single Hamiltonian $\tilde{H}(\v{\mu})$, where the Lagrange multipliers of the extremum problem play the role of coupling strengths $\mu_i$ to the various charges. In other words, we can think of $C(\v{\mu})$-complete passivity in Result~\ref{lem:disagreement} as a \emph{constrained} form of passivity, that coincides with the standard one applied only to a particular ``direction'' $\v{\mu}$. 

In the example $\mathcal{C} = \{H,L_x,L_y,L_z\}$, the generalised Gibbs state coincides with the equilibrium state of a system with the same Hamiltonian and subject to a magnetic field in the direction of $\v{\mu}$. In this dual description the Lagrange multipliers are naturally interpreted as constraint parameters defined by the physics; the residual $U(1)$ symmetry of $\rho_{\mathcal{C}}$ has an obvious interpretation as a special direction singled out by the physical problem.

There are good reasons why it may be sensible to frame the problem in this way. It is well-established that the low temperature states of interacting spin systems (such as Ising models with transverse magnetic fields) display thermodynamic phase transitions depending on the particular external field parameters, and are intrinsically quantum-mechanical in origin. We may argue that a thermodynamic approach to non-commuting conserved charges benefits from starting with this weakened form of passivity, computing its properties, and then ascertaining if subsequent variation of the parameters displays discontinuities. In fact, it is known that the maximum entropy inference can have discontinuities \cite{weis2012entropy} if some of the conserved quantities are mutually non-commuting (whereas it is continuous in the commuting case \cite{barndorff2014information}), and that these are related to quantum phase transitions \cite{chen2015discontinuity}. 

These considerations seem to favour an approach based on constrained passivity and in agreement with the maximum entropy principle. On the other hand, as we now clarify, Result~\ref{lem:disagreement} also shows that the maximum entropy state effectively acts as a (quantum) \emph{reference frame} \cite{bartlett2007reference}. We discuss why this challenges a proper inclusion of non-commutativity within the so-called resource theory approach to thermodynamics.

\section{Limitations of the resource theory approach}

\subsection{How should we build a resource theory of thermodynamics in the non-commuting case?} 

Recently a resource theory formulation of thermodynamics has been put forward, initiated in \cite{janzing2000thermodynamic, brandao2011resource}. 
Every resource theory is based on two notions: a subset of all quantum maps defines the set of \emph{allowed operations}, and a subset of all preparations defines the set of \emph{free states} $\mathcal{F}$ (these are generically assumed to be available in any number). For example, in the theory of entanglement, the free operations are Local Operations and Classical Communication (LOCC) and $\mathcal{F}$ is given by the set of separable states. Resource states are all entangled states. 

In the resource theory of Thermal Operations the allowed transformations are defined through a conservation law: they are all unitaries preserving energy; and $\mathcal{F}$ is given by all thermal states, i.e. $\gamma_R= e^{-\beta H_R}/\tr{}{e^{-\beta H_R}}$, for arbitrary $H_R$ and fixed $\beta$. We are also allowed to trace away some degrees of freedom. Combining the above, a general Thermal Operation can be written as
\be
\mathcal{E}(\rho) = \tr{R'}{U(\rho \otimes \gamma_R)U^\dag},
\ee 
with $U$ satisfying $[U, H + H_R] = 0$, $H$ the Hamiltonian of the system and $\gamma_R \in \mathcal{F}$. In general $R \neq R'$.  A detailed discussion of these assumptions and their connection to other approaches is given in Appendix~\ref{appendix_assumption}. 

Here we extend the conservation law on which Thermal Operations are defined to multiple, and in general \emph{non-commuting}, conserved quantities. For simplicity, we can limit ourselves to two conserved quantities, the extension to more charges being obvious. We define $(H,C_1,C_2)$-Thermal Operations as the set of transformations whose Stinespring dilation reads
\be
\mathcal{E}(\rho) = \tr{R'}{V(\rho \otimes \gamma_R)V^\dag},
\ee
with $V$ obeying 
\be
\nonumber
[V, X\otimes \I_R + \I\otimes X_R]= 0, \quad X=H, C_1,C_2
\ee
and $\gamma_R \in \mathcal{F}_{M}$. Here $\mathcal{F}_{M}$ is an appropriate generalisation of the set of free states to multiple and generally non-commuting conserved quantities. How do we choose $\mathcal{F}_{M}$?

In the case of a single conservation law the choice is based on the notion of non-trivial work transformations. A resource theory is called \emph{non-trivial} when some transitions $\rho \rightarrow \sigma$ are not allowed by means of free operations. An infinite set of possible non-trivial resource theories can be built from different choices of $\mathcal{F}$, given the conservation law on energy. These include the resource theory of Thermal Operations, $U(1)$-asymmetry theory (where $\mathcal{F}$ is given by all states $\rho$ invariant with respect to time translations \cite{gour2008resource}, $e^{-iH t } \rho e^{-iH t } = \rho $), and theories where $\mathcal{F}$ includes states with some specific modes of $U(1)$-asymmetry (as defined in \cite{marvian2013modes}).

A natural question is what singles out $\mathcal{F}$ in the resource theory of thermodynamics among all non-trivial theories. An answer was given in \cite{brandao2013second}, where it was shown that a state $\propto e^{-\beta H}$ is the only form ensuring non-trivial work processes. Specifically, this is the only choice of free state that does not allow to increase arbitrarily the average energy of a battery system (Theorem 8 of \cite{brandao2013second}). This is linked to the fact that the Gibbs state is, modulo limiting cases, the only completely passive state, and of course it also agrees with a choice of $\mathcal{F}$ based on the maximum entropy principle. All guiding principles suggest one and the same choice of $\mathcal{F}$. 

We are interested here in the generalisation of the resource theory of Thermal Operations in the presence of multiple conserved quantities. Due to the incompatibility between different approaches, it seems there are at least two distinct possibilities to choose $\mathcal{F}_{M}$:
\begin{enumerate}
	\item $\mathcal{F}_{M}$ should be given by states that are completely passive with respect to all thermodynamic variables.
	\item $\mathcal{F}_{M}$ should be given by the maximum entropy construction, or equivalently it should respect a constrained form of complete passivity, as established by Result~\ref{lem:disagreement}. This is the choice made in \cite{halpern2016microcanonical}.
\end{enumerate}
As we will now see, both these approaches have their limitations, highlighting current issues in the resource theory approach to thermodynamics.

\subsection{Limitations of current approaches}

If we follow the complete passivity approach, we find that the free states $\gamma$ should satisfy $[\gamma, C_i] = 0$ for all $C_i$. In the generic case this implies
\be
\gamma =  \sum_k a_k O_k, \mbox{  with } a_k \in \mathbb{R},
\ee
for some collection of observables \mbox{$O_k \in \cap_i \rm{Com}(C_i)$}, where \rm{Com}$(C_i)$ is the commutant of the operator $C_i$. The significance of this is that a $(C_1,...,C_n)$-passive state \emph{cannot} contain any component of $C_i$ in it. To explain what this means, for simplicity consider the case of a trivial Hamiltonian, and two non-commuting operators $A$ and $B$ such that the only operator that commutes with both is one proportional to the identity (as in the qubit example with the Pauli observables $A=X$ and $B=Y$). In this case the only passive or completely passive state with respect to both observables is the maximally mixed state $\I/d$. In particular, it is \emph{impossible} to reproduce the maximum entropy Gibbsian distribution in the $A$ and $B$ degrees of freedom. This is not to say that a resource theory is impossible, but the bath states act as random noise in the non-commuting charge degrees of freedom, in disagreement with the maximum entropy principle and equilibration based on typicality arguments. 

Notice that the choice of free states based on complete passivity is in agreement with asymmetry theory, as $\gamma$ is required to be symmetric with respect to the smallest group $G$ generated by the non-commuting observables at hand. More specifically, $U_g \rho U_g^{\dag} = \rho$ for all $g \in G$, where $g \mapsto U_g$ is an appropriate unitary representation of $G$. Any state not invariant under the action of $G$ is called a \emph{reference frame}, in that it can be used to encode information about the group element $g \in G$ (for $G=SU(2)$, $g$ is a direction) \cite{bartlett2007reference}. For the qubit example with $A=X$ and $B=Y$, we have $G=SU(2)$, so the only symmetric state is the maximally mixed one. Any $\rho \neq \I/2$ breaks rotational symmetry. From a symmetry perspective, any non symmetric state is a resource, and complete passivity ensures that no reference frame for $G$ is being introduced. 

Having a thermodynamic application in mind, and given the strong constraints imposed by complete passivity, one may be tempted to follow the second route and select $\mathcal{F}_{M}$ on the basis of the maximum entropy principle or, equivalently, a constrained notion of complete passivity with respect to $C(\v{\mu})$ (an approach followed in \cite{halpern2016microcanonical}). As the previous discussion implies, however, this choice of $\mathcal{F}_{M}$ allows the free introduction of reference frames for the group $G$ generated by the non-commuting charges. The point is easily understood in the example of Eq. \eqref{eq:sphericalme}, noticing that the maximum entropy states are not rotationally symmetric. The states in $\mathcal{F}_{M}$ are only invariant under a $U(1)$ subgroup of $G$. The symmetry constraints imposed by the conservation of multiple conserved quantities are then partially lifted, with exclusion of an abelian (specifically, $U(1)$) subgroup. Thermodynamically this implies that only the average value of $\beta  H + \sum_{i=1}^n \mu_i C_i$ cannot be increased indefinitely acting on a large number of free states.

 Hence to recover a clear thermodynamic interpretation we do not fully capture the underlying non-commutativity. To summarise, one has to choose between
\begin{enumerate}
	\item No external, symmetry-breaking axis, but only a limited Gibbsian form.
	\item A symmetry-broken scenario with a clear thermodynamic interpretation. 
\end{enumerate}

Despite some advances \cite{halpern2016microcanonical, guryanova2016thermodynamics} we hence believe that it is still an open question how resource theory approaches can capture the core aspects of the role of non-commutativity in thermodynamic processes. One reason at the root of some of these difficulties may be that resource theories do not easily handle external fields \cite{frenzel2014reexamination}.

\section{Conclusions}

We have presented an overview of some subtleties of thermodynamics in the presence of multiple conserved quantities. We proved that for some results, such as Landauer erasure, this entails a formally simple modification to take into account the presence of multiple ``currencies'' that can be used to pay for erasure. 

We have shown that these tradeoffs can also be seen through the lens of a more general result: the disagreement between maximum entropy construction and complete passivity, that in the single-charge scenario  give equivalent characterisations of the notion of equilibrium. 

This also lead us to reconsider the role of non-commutativity in current resource theory approaches. The seemingly most reasonable choice for the set of free states (the one based on the maximum entropy construction) is only equivalent to a constrained notion of complete passivity where a specific directions is singled out as special -- effectively making the theory insensitive to the stronger form of non-commutativity. The maximum entropy state acts as a quantum reference frame (in the sense of \cite{bartlett2007reference}) for the non-commuting observables, with exclusion of an abelian subgroup.

Non-commutativity, already well-established in the context of the study of quantum phase transitions, is an exciting new venue for small-scale thermodynamics.
Understanding to what extent it can be captured in the resource theory framework is a relevant open question.

\bigskip

\begin{acknowledgments}We (warmly) thank Kamil Korzekwa for numerous (heated) discussions on these topics. We would also like to thank Raam Uzdin for insightful discussions and the concise entropic account of passivity in \cite{uzdin2015coherence}. ML is supported by EPSRC and in part by COST Action MP1209. ML acknowledges financial support from the Spanish Ministry of Economy and Competitiveness, through the “Severo Ochoa” Programme for Centres of Excellence in R\&D (SEV-2015-0522), Spanish MINECO (QIBEQI FIS2016-80773-P), Fundació Cellex, CERCA Programme / Generalitat de Catalunya. DJ is supported by the Royal Society. 
	
\end{acknowledgments}

\bibliography{Bibliography_thermodynamics_4}

\appendix

\section{Corrections to Landauer erasure due to discreteness}
\label{appendix_discrete}

Following \cite{vaccaro2011information} (Section~(3b)), one finds that the angular momentum cost of erasure using a spin bath is 
\be
\Delta Q = \sum_{n=1}^\infty \hbar \frac{e^{-\alpha n}}{1+e^{-n \alpha}}.
\ee
Note that, from a more general perspective, this is analogous to the standard formula for the erasure cost used in Sec.~\ref{sec:qubitprotocol}, with the only difference that a discreteness in the operation of level transformation has been introduced. By an appropriate reinterpretation of the constant $\hbar$, hence, the previous formula not only describes processes in which a fundamental discreteness prevents a continuous change of the level structure, but also experimental limitations acting to the same effect.
 
The above sum can be expressed in terms of elementary functions, the $q$-digamma functions $\mathcal{\psi}_{q}(z)$,
\be
\label{eq:discrete}
\Delta Q = \frac{\hbar}{2} + \frac{\hbar}{\alpha}[i \pi + \log(e^\alpha -1) + \psi_{e^\alpha}(-i\pi/\alpha)].
\ee
One can check that in the infinite temperature limit $\Delta Q \approx \alpha^{-1}\log(2)$, as expected in the continuous case, whereas for temperatures close to zero $\Delta Q$ approaches $\hbar/2$. The reason is, of course, quantisation of angular momentum and the fact that erasure requires at least one step, whose cost is $\hbar /2$ (see Figure~\ref{fig:discrete}).

\begin{figure}[h!]
	\includegraphics[width=7cm]{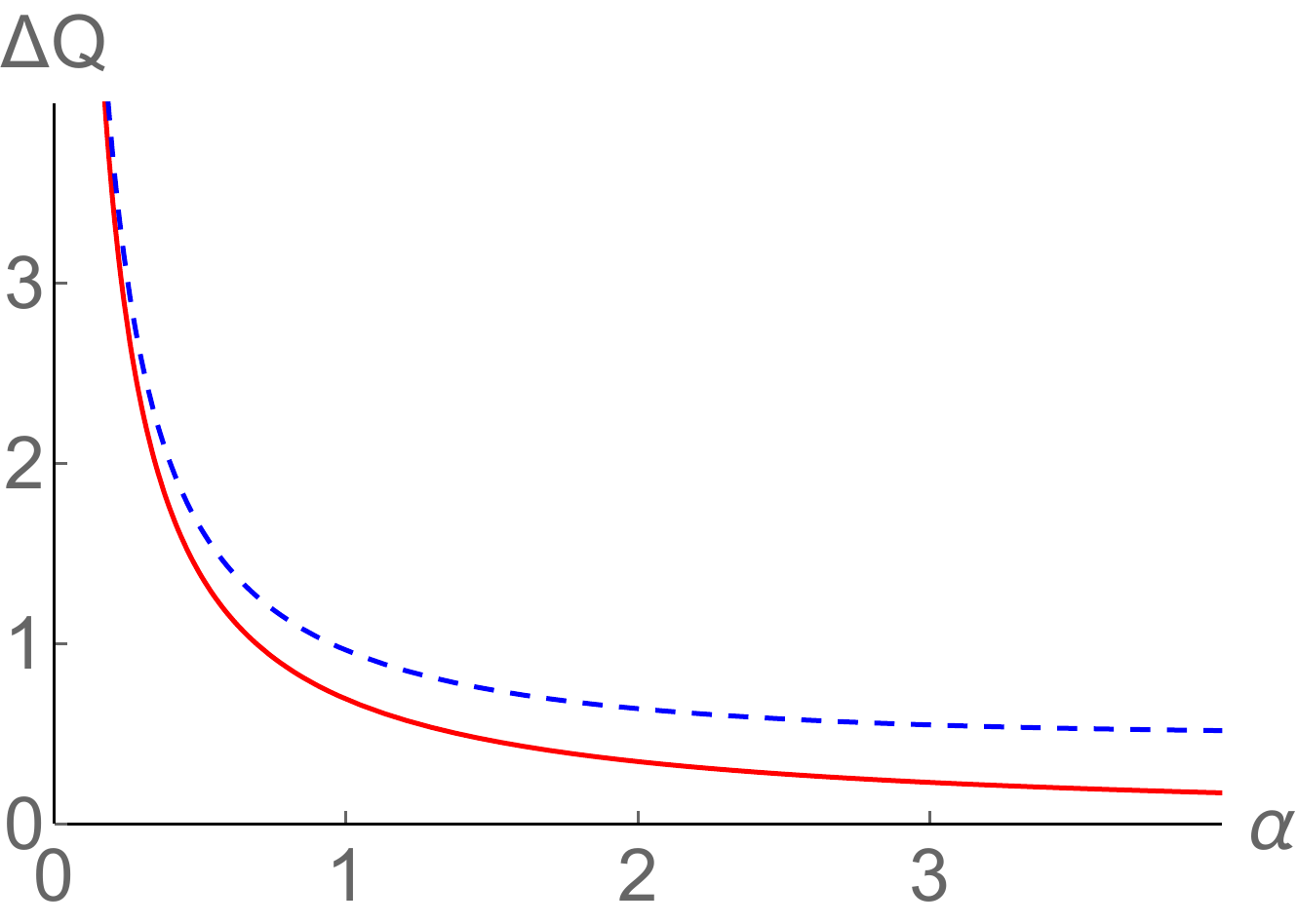}
	\caption{\label{discreteerasure} \textbf{Angular momentum baths and $\hbar$ discreteness.} The red continuous curve is the Landauer bound for a single charge in the continuous case, $\Delta Q = \alpha^{-1} \ln 2$. The blue dashed curve represents the erasure cost $\Delta Q$ as a function of $\alpha$ when level raising is discrete (as given by Eq.~\eqref{eq:discrete}, where we set units $\hbar=1$). Discreteness is relevant at low temperatures; in particular the minimum erasure cost equals the minimum allowed change in the level structure and not $0$ for $\alpha \rightarrow \infty$, since erasure requires at least one step to be performed.}
	\label{fig:discrete}
\end{figure}

	\section{Properties of the maximum entropy projection}
	\label{appendix_ME}
	
	Once $\mathcal{C}$ is fixed, consider the mapping defined in the main text:
	\be
	\rho \mapsto \rho_{\mathcal{C}}.
	\ee
From an inference point of view, $\rho_{\mathcal{C}}$ is the most unbiased state among all those compatible with the observed data, when uncertainty is measured through the von Neumann entropy. This mapping is obviously positive. However we now show that, in general, it is not completely positive.
	
	Let $\mathcal{C} =\{ \I, X  , Y \}$, where $X$ and $Y$ are Pauli operators in an effective qubit subsystem. Aside from the energy spectrum constraints (which are independent due to commutativity), $\rho_{\mathcal{C}}$ takes the form $\rho_{\mathcal{C}} \propto e^{ -\alpha_x X - \alpha_y Y}$. The maximum entropy projection $\rho \stackrel{\mathcal{E}}{\mapsto} \rho_{\mathcal{C}}$ selects $\alpha_x$ and $\alpha_y$ such that $\tr{}{\rho X} = \tr{}{\rho_{\mathcal{C}}X}$ and $\tr{}{\rho Y} = \tr{}{\rho_{\mathcal{C}}Y}$. For this particular model, such a map $\E$ turns out to be the well-known ``pancake map'', whose image when acted on the whole Bloch sphere is the $x$-$y$ equatorial plane. This defines a positive, but not \emph{completely positive} quantum map, in the sense that if applied to a subsystem of a maximally entangled state it generates negative probabilities. This is seen by computing $\tilde{\rho}=\E \otimes id ( \ketbra{\Omega}{\Omega})$ on the pure bipartite, maximally entangled state given by $\ket{\Omega}= (\ket{00} + \ket{11})/\sqrt{2}$. Since $\ketbra{\Omega}{\Omega} $ can be written as $\frac{1}{4} ( \I + XX)(\I + ZZ ) = \frac{1}{4} ( \I + XX + ZZ -YY)$, one finds that $\tilde{\rho}= \frac{1}{4} ( \I + XX - YY) $ which has eigenvalues $\{\frac{3}{4}, \frac{1}{4}, \frac{1}{4}, -\frac{1}{4}\}$, and therefore does not correspond to a physically allowed quantum state on the global system. 

	 As expected, the maximum entropy map is an inference procedure, not a physical map. In fact, one can easily check that not even close approximations of the pancake map can be completely positive. For example, an approximation of the pancake map is one in which the full Bloch sphere is shrank to a circle of radius $r$ within the $x$-$y$ equatorial plane; moreover, we can also allow the projection to be approximate, by setting the average value of $z$ to be $\epsilon$ instead of zero. For this approximate pancake map to be completely positive it is necessary that  $r \leq (1+\epsilon)/2$. This shows that not even this approximate mapping is physical.
	
	 	\section{Maximum entropy state is C(\v{\mu})-completely passivity}
	 	\label{appendix_cmucompletepassivity}
	 	
	 	The proof is based on standard arguments   \cite{uzdin2015coherence}.
	 	Consider $S(\rho || \rho_{\mathcal{C}})$, where $\rho_{\mathcal{C}}$ is given by Eq.~\eqref{eq:generalisedgibbs}. Using the expression for the quantum relative entropy, by direct computation one finds $S(\rho || \rho_{\mathcal{C}})=F_{\mathcal{C}}(\rho) - F_{\mathcal{C}}(\rho_{\mathcal{C}}) $, where $F_{\mathcal{C}}(\rho) := \sum_i \mu_i \tr{}{\rho C_i} - S(\rho)$. Since $S(\rho||\rho_{\mathcal{C}}) \ge 0$, with equality if and only if $\rho =\rho_{\mathcal{C}}$, we recover that $\rho_{\mathcal{C}}$ uniquely minimises $F_{\mathcal{C}}$. Now consider any two states $\rho$ and $\sigma$ linked by a unitary transformation, i.e., $\sigma = U \rho U^\dagger$. Then $F_{\mathcal{C}}(\sigma) - F_{\mathcal{C}}(\rho)=  \tr{}{\sigma C(\v{\mu})} - \tr{}{\rho C(\v{\mu})}$, where we used the unitary invariance of the von Neumann entropy. Taking $\rho = \rho_{\mathcal{C}}$, from $F_{\mathcal{C}}(\sigma) \geq F_{\mathcal{C}}(\rho_{\mathcal{C}})$ we obtain that no unitary (more generally, no constant entropy transformation) can decrease the expectation value of $C(\v{\mu})$. Hence, $\rho_{\mathcal{C}}$ is $C(\v{\mu})$-passive. Moreover, since $F_{\mathcal{C}}$ is an \emph{additive} function, precisely the same argument holds for arbitrarily many copies $\rho_{\mathcal{C}}^{\otimes n}$ and thus one deduces complete passivity in the same manner.

\section{Conservation law within the resource theory framework}
\label{appendix_assumption}

The assumptions of Thermal Operations are extremely minimal, and can be phrased informally as
\begin{enumerate}
\item ``Energy is microscopically conserved".
\item ``The Gibbs state is special".
\end{enumerate}
More precisely, we have that all Thermal Operations take the form 
\begin{equation}\label{thermal}
\E(\rho) = \tr{R'}{U (\rho \otimes \gamma_R) U^\dagger}, 
\end{equation}
with $U$ obeying $[U, H\otimes \I_R + \I\otimes H_R]= 0$ expressing that energy is conserved microscopically and \mbox{$\gamma_R = e^{-\beta H_R}/\tr{}{e^{-\beta H_R}}$} for an arbitrary $H_R$ but fixed $\beta$. Here $\tr{R'}{\cdot}$ denotes a partial trace on some of the degrees of freedom $R'$ (in general $R' \neq R$). If the final Hamiltonian of the system is different from the initial one, $H$ includes the degrees of freedom of a clock (see \cite{brandao2011resource}, Appendix H, \cite{horodecki2013fundamental}, Supplementary Note 1, \cite{brandao2013second} Appendix~I).

There are key reasons for making the assumption of microscopic energy conservation, particularly if one wishes to address thermodynamics in extreme quantum regimes. In traditional, classical, macroscopic thermodynamics the energy used is the internal energy of the system, expressed typically in terms of the expectation value of a Hamiltonian, $\tr{}{H \rho}$. However this method of quantifying energy is not made within the resource framework for important reasons, which we now discuss.

An averaged energy condition works well in the classical regime, but in extreme quantum regimes admits transformations that are highly problematic and questionable. For example, the condition allows unbounded injection of coherence into thermal states with no accounting of the back-reaction on the field beyond semi-classical approximations.  More explicitly, consider a system with energy eigenstates $\{|E_k\>\}$, initially described by a thermal Gibbs state $\gamma = \exp[ -\beta H]/Z$, on which we apply the quantum operation $\E$ for which
\begin{align}
\E (\gamma) &= |\psi\>\<\psi|, \nonumber\\
|\psi\> &= \frac{1}{\sqrt{Z}} \sum_k \sqrt{e^{-\beta E_k }} |E_k\>.
\end{align}
Any conservation law based on $\<H\>$, or indeed \emph{any function of the the moments} $\{ \<H^k\>\}_{k \in \mathbb{N}}$, will never forbid such a transformation, despite transforming to a pure quantum state and generating coherence for free. Even if $\mathcal{E}$ is required to be a unitary process, injection of quantum coherence is still possible, e.g. if $H = \ketbra{1}{1} + 2 \ketbra{2}{2}$, $\ket{1} \rightarrow (\ket{0} + \ket{2})/\sqrt{2}$ by a unitary $U$ that preserves the average energy. Therefore in regimes in which coherence properties are significant, an averaged energy condition is highly questionable and so we make the stronger assumption of microscopic energy conservation.

Given the form of Thermal Operations, it is also important to emphasize that these transformations:
\begin{enumerate}
\item Do not assume that the experimenter has \emph{microscopic control} over the unitary $U$ that is implemented. 
\item Do not assume weak-coupling, or that system \emph{couplings are switched off} at the end.
\item Do not clash with the application of \emph{time-dependent Hamiltonians} in thermodynamics.
\end{enumerate}
Thermal operations only require that the operations applied each have a Stinespring dilation of the form (\ref{thermal}). Operationally it is $\E$ that the experimenter can apply to the system and not the unitary $U$, which is anyway not unique. In fact, a large number of unitaries will induce the same transition, as it is discussed explicitly for the case of work distillation in \cite{brandao2011resource}, Appendix H. 

Point (2) is less obvious in the context of the decoupling of the system from a thermal bath, or the presence of strong couplings. Thermal Operations permit arbitrarily strong couplings, and \emph{do not} assume a perfect decoupling at the end of the process. The latter is permitted by the fact that Eq.~(\ref{thermal}) involves $\tr{R'}{\cdot}$ and not $\tr{R}{\cdot}$. In other words, the working system need not remain the same throughout and so is compatible with strong-coupling techniques such as reaction core methods~\cite{newman2016performance}.

The assumption of arbitrary time-dependent Hamiltonians is ultimately an asymptotic approximation and makes developing an information-theoretic theory of quantum coherence a subtle one. The framework of Thermal Operations provides a rigorous way to deal with such matters. Namely, one makes explicit how the time-dependent Hamiltonian is actually realised through a quantum reference frame. We therefore have thermodynamic processes of the form
\begin{equation}
\rho \otimes \chi \rightarrow \tr{R'}{U ( \rho \otimes \chi \otimes \gamma_R )U^\dagger}
\end{equation}
where $\chi$ is a quantum reference frame state that acts as a clock. When the reference frame is unbounded in the amount of coherence it has, then one can recover perfect control over time-dependent Hamiltonians \cite{bartlett2007reference}. However, for regimes in which coherence matters, one makes explicit back-actions and the irreversible depletion of quantum coherence \cite{korzekwa2015extraction}, which cannot be accounted for under the assumption of time-dependent Hamiltonians.

\end{document}